\newtheorem{lemma}{Lemma}
\newtheorem{theorem}{Theorem}
\newtheorem{corollary}{Corollary}
\newtheorem{assumption}{Assumption}
\definecolor{kthblue}{rgb}{0.098, 0.329, 0.651}
\definecolor{newgreen}{RGB}{38, 115, 77}
\definecolor{newpurple}{RGB}{102, 0, 204}
\definecolor{newred}{RGB}{153, 0, 0}
\definecolor{newbrown}{RGB}{115, 77, 38}
\definecolor{neworange}{RGB}{255, 153, 0}
\title{
\vspace*{0.25in}
Minimal positive Markov realizations}
\author{Hamed Taghavian and Jens Sjölund}
\begin{document}
\maketitle


\begin{abstract}
Finding a positive state-space realization with the minimum dimension for a given transfer function is an open problem in control theory. In this paper, we focus on positive realizations in Markov form and propose a linear programming approach that computes them with a minimum dimension. Such minimum dimension of positive Markov realizations is an upper bound of the minimal positive realization dimension. However, we show that these two dimensions are equal for certain systems.
\end{abstract}

\section{Introduction}\label{sec:intro}
Positive systems are a family of linear time-invariant systems that generate non-negative states and output, when the initial states and input are non-negative. These systems appear in numerous fields, including chemistry, sociology, and electronics, where the state variables that describe the system are inherently non-negative, such as concentrations, population levels, and electric charge~\cite{poplevels,enzyme,taghavian2024model}.

A system is positive if and only if all its state-space matrices are non-negative. By definition, positive systems have non-negative impulse responses. However, not all transfer functions with non-negative impulse responses admit positive realizations; necessary and sufficient conditions for the existence of positive realizations are provided in~\cite{existence}. When these conditions are met, a positive realization exists for the transfer function and can be computed using standard algorithms~\cite{existence,existence&algorithm}. However, this realization may not have the minimum dimension possible. In fact, finding the minimal positive realization of a given transfer function is an open problem~\cite{survey}. Even the dimension $N^{\star}$ of such minimal positive realizations is unknown and can be larger than the order of the transfer function $n$. Except in special cases, only upper and lower bounds have been obtained for $N^{\star}$.

A minimal positive realization is more economical and has computational benefits. For example, it reduces the size and power consumption of filters \cite{survey,CCD} and decreases the controller order in monotonic tracking systems~\cite{tac,ifac}. In general, the dimension of a positive realization directly affects the memory usage and computational complexity in optimal control systems~\cite{minimax}. Therefore, reducing the dimensions of positive state-space realizations has been the main concern of some recent research~\cite{survey,lam}.

Markov realizations are a family of canonical state-space realizations in which the Markov parameters of the system appear explicitly in the state-space matrices~\cite[\S 9]{farinabook}. In this paper, we study the family of transfer functions that admit positive realizations in Markov form and show that the dimension of such realizations can be easily minimized using linear programs. The minimum dimension of such positive Markov realizations equals $N^{\star}$, \emph{i.e.}, the minimal positive realization dimension, in certain cases.


This paper is organized as follows. The preliminaries are provided in Section~\ref{sec:preliminaries}. We introduce the positive Markov realizations in Section~\ref{sec:Markov} and identify a family of systems that admit such realizations in Section~\ref{sec:realizable}. We evaluate the results for third-order systems in Section~\ref{sec:3rdorder} and give the conclusion in Section~\ref{sec:conc}. 

\section*{notation}
We use the notation $\mathbb{N}_0=\mathbb{N}\cup\lbrace 0\rbrace$. For $n,m\in\mathbb{N}$, we write $n\mid m$ if $n$ divides $m$, and $n\nmid m$ otherwise. The greatest common divisor of $n$ and $m$ is $\textnormal{gcd}(n,m)$ and the remainder of the Euclidean division of $n$ by $m$ is $n\mbox{ mod }m$. The element on the $i$th row and $j$th column of the matrix $A$ is $[A]_{ij}$ and the $i$th element of the vector $a$ is $[a]_i$. Inequalities are interpreted element-wise. For example, matrix $A$ is non-negative if $A\geq 0$, that is, all its elements are non-negative. When there is no confusion, we use the same notation $a$ for the vector $a=\begin{bmatrix}a_0\;a_1\;\dots \; a_N\end{bmatrix}^T\in\mathbb{R}^{N+1}$ and the sequence $a=\lbrace a_0,a_1,\dots,a_N \rbrace$, which is defined as the coefficient sequence of the polynomial $A(z)=a_0z^N+a_1z^{N-1}+\dots+a_{N-1}z+a_N$. The sequence $a$ is decreasing if $a_k\geq a_{k+1}$ for all $k=0,1,\dots,N-1$. We use $(a*b)_k=\sum_{j=0}^k a_jb_{k-j}$ to denote the convolution of the two sequences $a$ and $b$. The interior of the set $\mathcal{M}$ is $\textnormal{int}(\mathcal{M})$.\looseness=-1

\section{Preliminaries}\label{sec:preliminaries}
Consider a linear single-input-single-output discrete-time system with the strictly proper transfer function 
\begin{align}\label{eqn:transfer_function}
H(z)=\frac{B(z)}{A(z)}&=\frac{b_1z^{n-1}+ \ldots +b_{n-1}z+b_n}{z^n+a_1z^{n-1}+ \ldots +a_{n-1}z+a_n}\nonumber\\
&=K\frac{ \prod_{k=1}^{m} (z-z_k)}{\prod_{k=1}^n (z-p_k)},
\end{align}
where there are no zero-pole cancellations and $b_k=0$ holds for $k\in[1,n-m)$. We assume the poles in (\ref{eqn:transfer_function}) are sorted in decreasing order of absolute values, that is
$$
\vert p_1\vert\geq \vert p_2\vert \geq \dots \geq \vert p_n\vert,
$$
and call $p_1$ a \emph{dominant} pole. We denote the Markov parameters of the system (\ref{eqn:transfer_function}) by $h_t=CA^tB$ where $t\in\mathbb{N}$. The state space realization
\begin{align*}
    \begin{array}{rl}
     x_{t+1}&=Ax_t+Bu_t,\\
    y_t&=Cx_t 
    \end{array}\quad t\in\mathbb{N}_0
\end{align*}
of the transfer function (\ref{eqn:transfer_function}) is called positive if all the state-space matrices $A$, $B$, $C$ are non-negative. The minimum dimension of such positive realizations is denoted by $N^\star$. We are interested in finding a finite upper bound $N$ for $N^{\star}$ and a state-space realization with dimension $N$. Therefore, without loss of generality, we make the following assumptions for the transfer function (\ref{eqn:transfer_function}) throughout the paper.

\begin{assumption}\label{ass:H}
The transfer function (\ref{eqn:transfer_function}) satisfies:
\begin{itemize}
    \item[1.] External positivity: $h_t\geq0$ for $t\in\mathbb{N}$.
    \item[2.] Normalized poles: $p_1=1$.
\end{itemize}
\end{assumption}

The first assumption is not restrictive because every transfer function (\ref{eqn:transfer_function}) that admits a positive realization is externally positive, \emph{i.e.}, it has non-negative Markov parameters~\cite{tutorial}. Moreover, all these transfer functions have a non-negative dominant pole $p_1\geq 0$~\cite{cdc2}. When this dominant pole is zero, the system is known to have a minimal realization of order $n$~\cite{tutorial}. Therefore, we only consider the case where the system (\ref{eqn:transfer_function}) has a positive dominant pole $p_1>0$. 

The second assumption is not restrictive either because scaling the poles does not change the minimal realization dimension. To see this, let the strictly proper transfer function $H(z)$ have a positive dominant pole $p_1>0$ and the positive realization $(A,B,C)$ of dimension $N$. Then the transfer function
$$
H(z/p_1)= p_1 C(zI- p_1 A)^{-1}B
$$
has a positive realization $( p_1 A,B, p_1 C)$ of the same dimension $N$.

\subsection{Systems with a single positive pole}\label{sec:1pospole}
We use $\mathcal{X}$ to denote the set of all transfer functions~(\ref{eqn:transfer_function}) with $n$ poles and $m$ zeros that satisfy Assumption~\ref{ass:H}. In this paper, we pay special attention to the subset $\mathcal{M}\subseteq \mathcal{X}$ of transfer functions that have no positive poles other than $p_1=1$:
\begin{equation}\label{eqn:nopos}
    \mathcal{M}:=\lbrace H(z)\in\mathcal{X} \,\vert\, p_i\notin (0,+\infty),\, i\in[2,n] \rbrace.
\end{equation}
The systems in $\mathcal{M}$ may or may not have a positive realization. A system $H(z)\in\mathcal{M}$ has a positive realization if it does not have any other dominant poles than $p_1=1$ (\cite[Theorem 11]{tutorial}), or when all its poles have angles that are rational multiples of $\pi$ (\cite[Corollary 14]{tutorial}). A system $H(z)\in\mathcal{M}$ does \emph{not} have a positive realization if it has dominant poles with angles that are \emph{irrational} multiples of $\pi$, according to the Perron–Frobenius Theorem (see \emph{e.g.}, \cite[Theorem 4]{tutorial}).




     

\section{Positive Markov realizations}\label{sec:Markov}
We introduce positive Markov realizations in this section and provide a linear program that computes these realizations when they exist. First, we extend the canonical Markov realization in \cite[\S 9]{farinabook} to higher dimensions, by introducing dummy zeros and poles to the transfer function (\ref{eqn:transfer_function}) as follows
\begin{align}\label{eqn:Htild}
    \tilde{H}(z)&=\frac{B(z)Q(z)}{A(z)Q(z)}\\
    &=\frac{B(z)Q(z)}{z^N+(a*q)_1z^{N-1}+ \ldots +(a*q)_N},\nonumber
\end{align}
where $Q(z)$ is a monic polynomial of order $N-n\in\mathbb{N}_0$. The transfer functions $H(z)$ and $\tilde{H}(z)$ are equal almost everywhere and have the same Markov parameters. Therefore, a realization of (\ref{eqn:transfer_function}) is
\begin{align}\label{eqn:Markov_realization}
    A&=\begin{bmatrix}
        0 & 0 & \dots & 0 & -(a*q)_N \\
        1 & 0 & \dots & 0 & -(a*q)_{N-1} \\
        0 & 1 & \dots & 0 & -(a*q)_{N-2} \\
        \vdots & \vdots & \ddots & \vdots & \vdots \\
        0 & 0 & \dots & 1 & -(a*q)_1 \\
    \end{bmatrix},\;
    B=\begin{bmatrix}
        1\\0\\\vdots\\0\\0
    \end{bmatrix}\\\nonumber
    C&=\begin{bmatrix}
        h_1&h_2&\dots &h_N
    \end{bmatrix},\; D=0,
\end{align}
which has dimension $N(\geq n)$. The realization (\ref{eqn:Markov_realization}) is positive if and only if the last column of $A$ is non-negative, \emph{i.e.},
\begin{equation}\label{eqn:a*q<0}
    (a*q)_k\leq 0,\quad k\in[1,N].
\end{equation}
Condition (\ref{eqn:a*q<0}) is equivalent to the feasibility of the following linear program
    \begin{equation}\label{eqn:linprog}
    \begin{array}[c]{rl}
    \textnormal{find} & q\in \mathbb{R}^{N-n+1}\\
    \textnormal{subject to}
    & W\mathcal{T}_a q\leq 0\\
    & [q]_1=1,
    \end{array}
    \end{equation}
where the constant matrices $W\in\mathbb{R}^{N\times N+1}$ and $\mathcal{T}_a\in\mathbb{R}^{N+1\times N-n+1}$ are given by~\cite{ECC}
\begin{align*}
[\mathcal{T}_a]_{ij}=\left\lbrace\begin{array}{cc}
    1  &  i-j=0\\
    a_{i-j}  &  i-j\in[1,n]\\
    0 & i-j\not\in[0,n]
\end{array}\right.,\,     W=\begin{bmatrix}
    0& I
\end{bmatrix}.
\end{align*}
We conclude that a system has a positive Markov realization of dimension $N$, if and only if, the linear program (\ref{eqn:linprog}) is feasible with $N$. In this case, the minimal positive realization dimension $N^{\star}$ is upper bounded by $N$. The minimum $N$ that makes the linear program (\ref{eqn:linprog}) feasible is called the minimal positive \emph{Markov} realization dimension. This dimension is not always equal to $N^\star$, since a given system may have a positive realization of a lower dimension that is not of Markov type.

\section{Systems that admit positive Markov realizations}\label{sec:realizable}
In Section~\ref{sec:Markov}, we showed that when the linear program~(\ref{eqn:linprog}) is feasible, the transfer function (\ref{eqn:transfer_function}) has a positive realization on the Markov form (\ref{eqn:Markov_realization}) with dimension $N(\geq N^{\star})$. In this section, we study the feasibility of the linear program (\ref{eqn:linprog}). In particular, we show that the linear program (\ref{eqn:linprog}) is feasible for all the transfer functions in a dense subset of $\textnormal{int}(\mathcal{M})$ and infeasible for all transfer functions outside $\mathcal{M}$, where $\mathcal{M}$ is defined by (\ref{eqn:nopos}).

When the linear program (\ref{eqn:linprog}) is feasible with $N$, it is also feasible with $N+k$ for all $k\geq 0$. To see this, note that the non-zero elements in the coefficient sequences of $z^{k}Q(z)A(z)$ and $Q(z)A(z)$ are equal, rendering the condition (\ref{eqn:a*q<0}) unchanged. Therefore, if the linear program (\ref{eqn:linprog}) is infeasible for a given system, one may increase the realization dimension $N$ and try again.
However, the linear program (\ref{eqn:linprog}) is infeasible for some systems, regardless of how large $N$ is chosen. For example, when $$H(z)\not\in \mathcal{M},$$
the linear program (\ref{eqn:linprog}) is infeasible for all $N\in\mathbb{N}$. To see why, note that if $H(z)\not\in \mathcal{M}$ then $A(z)$ (and therefore $A(z)Q(z)$) has $k>1$ positive poles, and by Descartes' rule of signs, the sequence $a*q$ changes sign at least $k>1$ times, which makes it impossible to satisfy the condition (\ref{eqn:a*q<0}).

Therefore, to avoid an open-ended search, we identify a family of systems in $\mathcal{M}$ for which the linear program (\ref{eqn:linprog}) is feasible with a realization dimension $N=\mu_n$ obtained from the pole angles. Note that the dimension $\mu_n$ is often larger than the minimum possible for positive Markov realizations. However, it can be used in a bisection search to find the minimum positive Markov realization dimension, \emph{i.e.}, the smallest $N$ that makes the linear program (\ref{eqn:linprog}) feasible.

To present our result, we let $H(z)\in\mathcal{M}$ and define the polynomial $\hat{A}(z)$ as
$$
A(z):=(z-1)\hat{A}(z).
$$
The following lemma expresses the linear program (\ref{eqn:linprog}) feasibility in terms of the coefficients $\hat{a}$ of the polynomial $\hat{A}(z)$.

\begin{lemma}\label{lem:AhatQ}
    Let $H(z)\in\mathcal{M}$. The linear program (\ref{eqn:linprog}) is feasible if and only if the polynomial $\hat{A}(z)Q(z)$ has non-negative decreasing coefficients.
\end{lemma}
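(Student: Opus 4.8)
The plan is to reduce the inequality system (\ref{eqn:a*q<0}) to a statement about the coefficient sequence of $\hat{A}(z)Q(z)$ via a single polynomial division. First I would record that, since $H(z)\in\mathcal{M}$ has the pole $p_1=1$, the factor $z-1$ divides $A(z)$, so $\hat{A}(z)=A(z)/(z-1)$ is a well-defined real monic polynomial of degree $n-1$. Writing $c:=\hat{a}*q$ for the coefficient sequence of $\hat{A}(z)Q(z)$, this is a monic polynomial of degree $N-1$, so $c_0=1$.

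The key step is to expand $A(z)Q(z)=(z-1)\,\hat{A}(z)Q(z)$ and match coefficients. Multiplying the degree-$(N-1)$ polynomial with coefficients $c_0,\dots,c_{N-1}$ by $z-1$ telescopes into
\[
(a*q)_0=c_0,\qquad (a*q)_k=c_k-c_{k-1}\ (1\le k\le N-1),\qquad (a*q)_N=-c_{N-1}.
\]
This identity carries the whole content of the lemma; everything after it is simply reading off inequalities.

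With this in hand I would translate condition (\ref{eqn:a*q<0}) directly. The inequalities $(a*q)_k\le 0$ for $1\le k\le N-1$ are exactly $c_{k-1}\ge c_k$, i.e. $c$ is decreasing, while the final inequality $(a*q)_N\le 0$ is exactly $c_{N-1}\ge 0$. For the forward direction, decreasingness together with $c_{N-1}\ge0$ forces $c_0\ge\cdots\ge c_{N-1}\ge 0$, so $c$ is non-negative and decreasing; conversely, a non-negative decreasing $c$ makes every $(a*q)_k$ non-positive and recovers (\ref{eqn:a*q<0}). Since $Q$ is monic throughout (the constraint $[q]_1=1$ is consistent with $c_0=1$), the feasibility of (\ref{eqn:linprog}) and the stated property of $\hat{A}(z)Q(z)$ are equivalent.

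I do not anticipate a genuine obstacle, as the result is a direct algebraic identity; the only point requiring care is the boundary bookkeeping — namely that non-negativity of the entire sequence $c$ is not assumed but \emph{deduced} by combining the single endpoint inequality $c_{N-1}\ge0$ with monotonicity. I would double-check the index extremes $k=0$ (leading coefficient, automatically $1$) and $k=N$ (constant term, $-c_{N-1}$) so that no coefficient is double-counted or dropped.
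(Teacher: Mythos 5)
Your proof is correct and follows essentially the same route as the paper's: both rest on the telescoping identity $(a*q)_k=(\hat{a}*q)_k-(\hat{a}*q)_{k-1}$ obtained from $A(z)Q(z)=(z-1)\hat{A}(z)Q(z)$, which turns condition (\ref{eqn:a*q<0}) into monotonicity of the coefficients of $\hat{A}(z)Q(z)$ together with non-negativity of the last one. Your version is in fact slightly more careful with the endpoint cases $k=0$ and $k=N$, which the paper's terser proof leaves implicit.
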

\begin{proof}
    The linear program (\ref{eqn:linprog}) is feasible if and only if condition (\ref{eqn:a*q<0}) is satisfied, which is equivalent to
    $$
    (\hat{a}*q)_k=(\hat{a}*q)_k-(\hat{a}*q)_{k-1}\leq 0,\quad k\in[1,N].
    $$
    Expanding this condition for $k=1,2,\dots,N$ reads
    $$
    1=(\hat{a}*q)_0\geq \dots \geq (\hat{a}*q)_{N-1} \geq 0.
    $$
\end{proof}


The following theorem indicates that the linear program (\ref{eqn:linprog}) is feasible for a family of systems in $\mathcal{M}$ whose pole angles are rational multiples of $\pi$.


\begin{theorem}\label{thm:mu}
Let $H(z)\in\mathcal{M}$ and
\begin{align}\label{eqn:rcos(teta)_np}
    r_k\exp(i 2\pi {l_k}/{m_k}),\quad k=1,2,\dots,n_p
\end{align}
be the poles of $H(z)$ with non-negative imaginary parts sorted in decreasing magnitudes, \emph{i.e.}, $1=r_1\geq r_2\geq \dots \geq r_{n_p}$ where $ n_p\in[1, n]$. Assume $m_1=1,l_1=0$, that $m_k,l_k\in\mathbb{N}$ are relatively prime and $m_k>l_k$ for $k\in[2,n_p]$. Then the linear program (\ref{eqn:linprog}) is feasible with $N=\prod_{k=1}^{n_p}m_k$ if
\begin{equation}\label{eqn:m_notod_mu}
        m_k\nmid m_1m_2\dots m_{k-1}, \quad k\in[2,n_p].
    \end{equation}
\end{theorem}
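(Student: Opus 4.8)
The plan is to invoke Lemma~\ref{lem:AhatQ} to convert the claim into an explicit construction: it suffices to produce a monic polynomial $Q$ of degree $N-n$ for which $G(z):=\hat{A}(z)Q(z)$ has non-negative decreasing coefficients. Writing $\mu_k:=\prod_{j=1}^{k}m_j$, so that $\mu_1=1$ and $\mu_{n_p}=N$, I would propose the candidate
\begin{equation*}
G(z)=\prod_{k=2}^{n_p}\frac{z^{\mu_k}-r_k^{\mu_k}}{z^{\mu_{k-1}}-r_k^{\mu_{k-1}}}.
\end{equation*}
This is a genuine polynomial because $\mu_{k-1}\mid\mu_k$, it is monic, and by telescoping it has degree $\sum_{k=2}^{n_p}(\mu_k-\mu_{k-1})=\mu_{n_p}-\mu_1=N-1$, which matches $\deg(\hat A Q)$.

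First I would verify that $\hat{A}\mid G$, i.e.\ that every pole of $H(z)$ other than $z=1$ is a root of $G$. The roots of the $k$th factor are the $\mu_k$th roots of $r_k^{\mu_k}$ that are \emph{not} $\mu_{k-1}$th roots of $r_k^{\mu_{k-1}}$, that is, the points $r_k\exp(i2\pi t/\mu_k)$ with $m_k\nmid t$. Since the pole $r_k\exp(i2\pi l_k/m_k)$ equals $r_k\exp\!\big(i2\pi(l_k\mu_{k-1})/\mu_k\big)$ and $\gcd(l_k,m_k)=1$, it survives as a root exactly when $m_k\nmid\mu_{k-1}$, which is hypothesis~(\ref{eqn:m_notod_mu}); the same applies to its conjugate. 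Hence $\hat{A}\mid G$, and $Q:=G/\hat A$ is monic of degree $(N-1)-(n-1)=N-n$, yielding a feasible point of the linear program~(\ref{eqn:linprog}).

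The crux is to show that $G$ has non-negative decreasing coefficients. Expanding each factor as $\sum_{i=0}^{m_k-1}r_k^{\mu_{k-1}i}\,z^{\mu_{k-1}(m_k-1-i)}$, a term of the product is indexed by digits $(i_2,\dots,i_{n_p})$ with $0\le i_k\le m_k-1$, sits at degree $N-1-D$ with $D=\sum_{k=2}^{n_p}\mu_{k-1}i_k$, and carries coefficient $\prod_{k=2}^{n_p}r_k^{\mu_{k-1}i_k}$. Since the $\mu_{k-1}$ are the place values of a mixed-radix system with digit bounds $m_k$, the map $D\mapsto(i_k)$ is a bijection onto $\{0,\dots,N-1\}$, so each degree occurs exactly once and every coefficient $g_D=\prod_k r_k^{\mu_{k-1}i_k(D)}$ is nonnegative, settling positivity. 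For monotonicity I would compare consecutive coefficients through the odometer increment $D\mapsto D+1$: if $k$ is the lowest unsaturated position, this step sends $i_k\mapsto i_k+1$ and resets $i_2,\dots,i_{k-1}$ to $0$, giving
\begin{equation*}
\frac{g_{D+1}}{g_D}=r_k^{\mu_{k-1}}\Big/\prod_{j=2}^{k-1}r_j^{\mu_{j-1}(m_j-1)}.
\end{equation*}
Using the ordering $r_j\ge r_k$ for $j\le k$ together with the telescoping identity $\sum_{j=2}^{k-1}\mu_{j-1}(m_j-1)=\mu_{k-1}-1$, I would bound $\prod_{j=2}^{k-1}r_j^{\mu_{j-1}(m_j-1)}\ge r_k^{\mu_{k-1}-1}\ge r_k^{\mu_{k-1}}$, whence $g_{D+1}\le g_D$.

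I expect this monotonicity estimate to be the main obstacle, and it is precisely where both hypotheses are consumed: the divisibility condition~(\ref{eqn:m_notod_mu}) ensures the prescribed poles actually appear as roots of $G$, while the decreasing-magnitude ordering $1=r_1\ge\cdots\ge r_{n_p}$ drives the coefficient inequality. The remaining work is bookkeeping for degenerate cases — real poles ($m_k=2$, where the conjugate pair collapses to a single linear factor of $\hat A$) and any repeated poles (which would require matching multiplicities across the factors) — all of which I would dispatch by the same root-counting argument.
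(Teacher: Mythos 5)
Your proposal is correct and follows essentially the same route as the paper: your candidate $G(z)=\prod_{k=2}^{n_p}\bigl(z^{\mu_k}-r_k^{\mu_k}\bigr)/\bigl(z^{\mu_{k-1}}-r_k^{\mu_{k-1}}\bigr)$ is exactly the paper's $\Omega_{n_p}(z)=\prod_{j=2}^{n_p}P_j(z^{\mu_{j-1}})$ written in quotient form, and you reduce to Lemma~\ref{lem:AhatQ}, verify divisibility by $\hat{A}$ via the same condition $m_k\nmid l_k\mu_{k-1}\Leftrightarrow m_k\nmid\mu_{k-1}$, and establish the decreasing-coefficient property through the same key inequality $r_k^{\mu_{k-1}}\leq\prod_{j=2}^{k-1}r_j^{\mu_j-\mu_{j-1}}\,(\geq r_k^{\mu_{k-1}-1})$. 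The only cosmetic difference is that you organize the coefficient comparison globally via mixed-radix digits and the odometer increment, whereas the paper runs an induction on $k$ checking the carry positions $\mu_{k-1}\mid t$ — these are the same computation.
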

\begin{proof}
The case $n_p=1$ is trivial as $\hat{A}(z)=1$, and therefore, the condition (\ref{eqn:a*q<0}) is satisfied by $Q(z)=1$ (Lemma~\ref{lem:AhatQ}) and the linear program (\ref{eqn:linprog}) is feasible with $N=1$. Therefore, we only consider the case $n_p\geq 2$. Before beginning the proof, we first define the sequence
    \begin{equation}\label{eqn:mu}
        \mu_k:=\left\lbrace\begin{array}{ll}
            1, & k=1\\
           m_k\mu_{k-1},  & 2\leq k\leq n_p
        \end{array}\right.
    \end{equation}
and the polynomial functions
\begin{align}\label{eqn:P&Omega}
P_j(z)&=z^{m_j-1}+r_j^{\mu_{j-1}} z^{m_j-2}+\dots +r_j^{\mu_{j-1}(m_j-2)}z\nonumber\\
&+r_j^{\mu_{j-1}(m_j-1)},\nonumber\\
\Omega_k(z)&=\prod\nolimits_{j=2}^{k} P_j(z^{\mu_{j-1}}), \quad k=2,3,\dots,n_p.
\end{align}

We prove this theorem by showing that the polynomial $\Omega_{n_p}(z)$ contains all the roots of $\hat{A}(z)$ (and hence it is divisible by $\hat{A}(z)$) and that it has non-negative decreasing coefficients. Then Lemma~\ref{lem:AhatQ} asserts that the coefficients of $Q(z)=\Omega_{n_p}(z)/\hat{A}(z)$ satisfy the linear constraints in (\ref{eqn:linprog}) and thereby, the linear program (\ref{eqn:linprog}) is feasible.

To prove that $\Omega_{n_p}(z)$ contains all the roots of $\hat{A}(z)$, note that $P_j(z)=0$ if and only if $z$ is an $m_j$-th root of $r_j^{\mu_{j-1}m_j}$ in the region $z\in\mathbb{C}-(0,+\infty)$.
Therefore, the roots of the polynomial $P_j(z^{\mu_{j-1}})$ are given by $z=r_j\exp(i\phi_{j,k',l'})$, where
$$
\phi_{j,k',l'}=\frac{2k'\pi}{m_j\mu_{j-1}}+\frac{2l'\pi}{\mu_{j-1}}, \;
l'\in[0,\mu_{j-1}), k'\in(0,m_j).
$$
Let $\theta_j=2\pi l_j/m_j$ in which $j\in[ 2,n_p]$. Since $\theta_j,\phi_{j,k',l'}\in[0,2\pi]$, the polynomial $P_j(z^{\mu_{j-1}})$ contains $r_j\exp(i\theta_j)$ (and its complex-conjugate when $\theta_j\neq \pi$) among its roots if there are $l'$ and $k'$ such that $\phi_{j,k',l'}=\theta_j$, which is equivalent to
$$
\mu_{j-1}l_j=k'+m_jl'.
$$
The above equation has a solution for $l'\in[0,\mu_{j-1})$ and $k'\in(0,m_j)$ if and only if
$$
m_j\nmid \mu_{j-1}l_j.
$$
This condition is equivalent to (\ref{eqn:m_notod_mu}) since $\textnormal{gcd}(m_j,l_j)=1$. Therefore, all the roots of $\hat{A}(z)$ are included in the roots of the polynomial $\Omega_{n_p}(z)$.

Next, we prove that the polynomial $\Omega_k(z)$ in (\ref{eqn:P&Omega}) has non-negative decreasing coefficients $\omega^{(k)}=\lbrace \omega^{(k)}_t\rbrace$ for all $k=2,3,\dots,n_p$. We use induction for this purpose. For $k=2$, we have
\begin{align*}
    \Omega_2(z)&=P_2(z)=z^{\mu_2-1}+r_2 z^{\mu_2-2}+\dots +r_2^{\mu_2-2}z\\
    &+r_2^{\mu_{2}-1},
\end{align*}
which has non-negative decreasing coefficients as $r_2\in[0,1]$. To prove this point for $k\in[3,n]$, we assume the polynomial
\begin{align*}
    \Omega_{k-1}(z)&=z^{\mu_{k-1}-1}+ \omega^{(k-1)}_1 z^{\mu_{k-1}-2}+\dots\\
    &+\omega^{(k-1)}_{\mu_{k-1}-2} z+\omega^{(k-1)}_{\mu_{k-1}-1}
\end{align*}
has non-negative decreasing coefficients. Let $\rho^{(j)}=\lbrace \rho^{(j)}_t\rbrace$ denote the coefficient sequence of the polynomial $P_j(z^{\mu_{j-1}})$ where
$$
\rho^{(j)}_t=\left\lbrace \begin{array}{ll}
     r^t_j, & t\in[0,\mu_j-\mu_{j-1}],\,\mu_{j-1}\mid t \\
     0, & \textnormal{otherwise} 
\end{array}\right. .
$$
Then the coefficients of $\Omega_k(z)$ are given by $\omega^{(k)}=\omega^{(k-1)}*\rho^{(k)}$ as follows
\begin{equation}\label{eqn:omega_k,t}
    \omega^{(k)}_{t}=\left\lbrace\begin{array}{ll}
        r_k^{s(t) \mu_{k-1}}\omega^{(k-1)}_{\tau(t)}, & t\in[0,\mu_k-1] \\
        0, & \textnormal{otherwise}
         \end{array}\right. ,
\end{equation}
where $s(t)=\lfloor t/\mu_{k-1} \rfloor$ and $\tau(t)=t\mbox{ mod }\mu_{k-1}$. Since the sequence $\omega^{(k-1)}$ is decreasing by assumption, the sequence~(\ref{eqn:omega_k,t}) is also decreasing whenever $s(t)$ is constant, that is, in the intervals
$$
t\in[s\mu_{k-1},(s+1)\mu_{k-1}), \quad s=0,1,\dots,m_k-1.
$$
Thus, the sequence (\ref{eqn:omega_k,t}) is decreasing in the whole range $t\in[0,\mu_k-1]$, if and only if $\omega^{(k)}_{t}\leq \omega^{(k)}_{t-1}$ also holds when $\mu_{k-1}\mid t$. This condition is equivalent to
\begin{align}\label{step1}
    r_k^{\mu_{k-1}} &\leq \omega^{(k-1)}_{\mu_{k-1}-1} \\
    &=\prod_{j=2}^{k-1} r_{j}^{\mu_j-\mu_{j-1}}.\nonumber
\end{align}
However, since the sequence $\lbrace r_j\rbrace_{j=2}^k$ is decreasing, we have
$$
    \prod_{j=2}^{k-1} r_{j}^{\mu_j-\mu_{j-1}}\geq
\prod_{j=2}^{k-1} r_{k}^{\mu_j-\mu_{j-1}}= 
r_{k}^{\mu_{k-1}-1}.
$$
Therefore inequality (\ref{step1}) is satisfied if $r_{k}^{\mu_{k-1}}\leq r_{k}^{\mu_{k-1}-1}$, which is true as $r_k\in[0,1]$. This proves that $\omega^{(k)}$ is a non-negative and decreasing sequence. As this result holds for all $k\in[2,n_p]$, the coefficients of $\Omega_{k}(z)$ are non-negative and decreasing.

Since the polynomial $\Omega_{n_p}(z)$ contains all the roots of $\hat{A}(z)$ and has non-negative decreasing coefficients, the linear program (\ref{eqn:linprog}) is feasible with
\begin{align}\label{eqn:order1}
N&=\textnormal{Deg}(Q)+n\nonumber\\
&=\textnormal{Deg}(\Omega_{n_p})-\textnormal{Deg}(\hat{A})+n\nonumber\\
&=\textnormal{Deg}(\Omega_{n_p})+1,
\end{align}
according to Lemma~\ref{lem:AhatQ}. To calculate $\textnormal{Deg}(\Omega_{n_p})$ in (\ref{eqn:order1}), we first use the relation (\ref{eqn:mu}) to eliminate $m_j$ in (\ref{eqn:P&Omega}) and write the polynomial $P_j(z^{\mu_{j-1}})$ in the following form
\begin{align*}
P_j(z^{\mu_{j-1}})&=z^{\mu_j-\mu_{j-1}}+r_j^{\mu_{j-1}}z^{\mu_j-2\mu_{j-1}}+\dots\\
&+r_j^{\mu_j-2\mu_{j-1}}z^{\mu_{j-1}}+r_j^{\mu_{j}-\mu_{j-1}},
\end{align*}
where $2\leq j\leq n_p$. Then, the degree of the polynomial $\Omega_k(z)$ in (\ref{eqn:P&Omega}) can be written as
\begin{align}\label{eqn:degOmega}
\textnormal{Deg}(\Omega_k)&=\sum\nolimits_{j=2}^{k}\textnormal{Deg}(P_j(z^{\mu_{j-1}}))\nonumber\\  
&=\sum\nolimits_{j=2}^{k} \mu_j-\mu_{j-1}\nonumber\\
&=\mu_{k}-1.
\end{align}
By substituting (\ref{eqn:degOmega}) in (\ref{eqn:order1}), we conclude that the linear program (\ref{eqn:linprog}) is feasible with
$
N=\mu_{n_p}=\prod_{k=1}^{n_p}m_k
$.
\end{proof}

The following corollary is a consequence of Theorem~\ref{thm:mu}.

\begin{corollary}\label{cor:intM}
    The set of systems for which the linear program (\ref{eqn:linprog}) is feasible is dense in $\textnormal{int}(\mathcal{M})$.
\end{corollary}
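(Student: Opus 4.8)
The plan is to show that the LP-feasible systems are dense in $\textnormal{int}(\mathcal{M})$, with the bulk handled by Theorem~\ref{thm:mu}. Since $\textnormal{int}(\mathcal{M})$ is open and the coefficients of $H$ depend continuously on the pole and zero locations, it suffices to show that for every $H\in\textnormal{int}(\mathcal{M})$ and every $\varepsilon>0$ one can displace the poles by at most $\varepsilon$, keeping them off the positive real axis, to a configuration that is LP-feasible. The dominant pole stays fixed at $p_1=1$ (it contributes $m_1=1$), while external positivity and the absence of other positive poles are preserved under small perturbations precisely because $H$ lies in the \emph{interior} of $\mathcal{M}$.

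First I would treat the non-real poles, which occur in conjugate pairs $\rho_j e^{\pm i\theta_j}$ with $\theta_j\in(0,\pi)$ and $\rho_j<1$. Using the density of $\{2\pi l/m\}$ in the angles, I would replace each $\theta_j$ by a nearby $2\pi l_j/m_j$ whose denominator $m_j$ is a \emph{large prime}, chosen distinct across pairs. Since $m_j$ is prime, $\gcd(l_j,m_j)=1$ and $m_j\nmid m_1\cdots m_{j-1}$ hold automatically, so the divisibility condition (\ref{eqn:m_notod_mu}) is met and distinct prime denominators force distinct angles. Perturbing the angle of a conjugate pair keeps the coefficients real and, for $m_j$ large enough, keeps the displacement below $\varepsilon$ while the pole remains strictly inside the unit disk and off the positive axis, so Theorem~\ref{thm:mu} applies.

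The hard part is the poles pinned to the real axis. The angle-$0$ pole is the unique dominant one, but a negative real pole sits at angle $\pi$, i.e.\ $m=2$, and a simple real root cannot be moved off the real axis by a small perturbation without colliding with another real root. Hence, if $H$ has two or more negative real poles, every nearby system has them too, all forced to $m=2$, and the divisibility condition (\ref{eqn:m_notod_mu}) \emph{fails} for every neighbour: Theorem~\ref{thm:mu} covers none of them. This is the main obstacle, and it shows the feasible set cannot be reached through Theorem~\ref{thm:mu} alone.

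To close the gap I would verify feasibility directly for these configurations via Lemma~\ref{lem:AhatQ}. Feasibility of (\ref{eqn:linprog}) at dimension $N$ is equivalent to the existence of a nonnegative polynomial $D$ with $D(1)=1$ and $A(z)Q(z)=z^N-D(z)$, and divisibility by $A$ reduces to the interpolation $D(p_i)=p_i^N$ at each non-dominant pole $p_i$ (with multiplicity). For large $N$ the targets $p_i^N\to0$, so solvability with nonnegative coefficients amounts to the origin lying in the convex hull of the moment points $\{(p_i^k)_i\}_{k\ge0}$. I expect this to be the crux: a supporting half-plane would yield a nonzero $(\gamma_i)$ with $\sum_i\operatorname{Re}(\gamma_i p_i^k)\ge0$ for all $k$, yet, because the poles lie off the positive real axis and are distinct, the top-magnitude contribution is a mean-zero trigonometric polynomial in $k$, which is nonnegative only if it vanishes identically; peeling off magnitudes in turn forces every $\gamma_i=0$. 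Thus the origin is interior to the hull, the interpolation is solvable for large $N$, and the exceptional real-pole systems are themselves feasible, completing the density argument.
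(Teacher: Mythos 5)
Your first two paragraphs follow the same route as the paper: perturb the pole angles of $H\in\textnormal{int}(\mathcal{M})$ to rational multiples of $\pi$ satisfying (\ref{eqn:m_notod_mu}), invoke Theorem~\ref{thm:mu}, and use interiority to keep the perturbed system in $\mathcal{M}$. Your device for enforcing (\ref{eqn:m_notod_mu}) --- distinct large prime denominators --- is cleaner than the paper's iterative repair (\ref{eqn:gamma})--(\ref{rule}), which replaces a failing pair $(l'_k,m'_k)$ by $(l'^2_k\gamma_k,\,m'_kl'_k\gamma_k+1)$ so that $m''_k>\mu''_{k-1}$ while the angle moves by $O(1/\gamma_k)$; both accomplish the same thing.

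Where you genuinely diverge is the treatment of negative real poles, and your diagnosis is correct: a system with two or more distinct simple poles on the negative real axis forces $m=2$ twice in every real-coefficient neighbour, so (\ref{eqn:m_notod_mu}) fails throughout a neighbourhood and Theorem~\ref{thm:mu} cannot reach such systems by perturbation alone. (The paper's procedure does not special-case this: its step 4 moves the second angle off $\pi$, and ``choosing these angles and their additive inverses'' then either doubles that pole or breaks realness --- a point the paper leaves unaddressed.) Your proposed fix --- prove feasibility directly from $A(z)Q(z)=z^N-D(z)$ with $D\geq0$, $D(1)=1$, reducing to whether the target $(p_i^N)_i\to0$ lies in the convex hull of the moment points --- is the right idea, and if completed it proves the stronger claim that every system in $\textnormal{int}(\mathcal{M})$ is itself feasible for $N$ large enough, which would subsume your first half (at the cost of the explicit bound $N=\prod_k m_k$). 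As written, however, it is a sketch with real gaps: you silently assume every non-dominant pole of an interior point of $\mathcal{M}$ has modulus strictly below $1$ (true, but it needs an argument, since for $\vert p_i\vert=1$ the target does not vanish and $\vert D(p_i)\vert\leq D(1)=1$ obstructs the interpolation); the ``peeling off magnitudes'' separation step needs the Ces\`aro/orthogonality computation made precise; and repeated poles contribute derivative conditions whose moment vectors involve $k\,p_i^{k-1}$ and do not peel off by magnitude in the same way. None of these looks fatal, but the second half of your argument is not yet a proof.
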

\begin{proof}
To prove this result, we show that for all $H(z)\in
\textnormal{int}(\mathcal{M})$, there is an arbitrarily close transfer function
\begin{equation}\label{eqn:G}
    G(s)=K\frac{\prod_{k=1}^m (z-z''_k)}{\prod_{k=1}^n (z-p''_k)}=\frac{B(z)}{\prod_{k=1}^n (z-p''_k)}
\end{equation}
to $H(z)$ in the sense of the metric
\begin{equation}\label{eqn:metr}
    d(H,G):=\sum_{k=1}^m \vert z''_k-z_k\vert+\sum_{k=1}^n \vert p''_k-p_k\vert,
\end{equation}
for which the linear program (\ref{eqn:linprog}) is feasible. Hence the set of systems that make the linear program (\ref{eqn:linprog}) feasible is dense in the topological space induced by (\ref{eqn:metr}). The only difference between the transfer functions (\ref{eqn:G}) and (\ref{eqn:transfer_function}) is their sets of poles. The poles of (\ref{eqn:G}) are perturbed enough from $p$ to satisfy the hypothesis of Theorem~\ref{thm:mu}: to have angles that are rational multiples of $\pi$ and to satisfy the condition (\ref{eqn:m_notod_mu}).

To obtain such poles $p''$ for (\ref{eqn:G}), we first find angles that are rational multiples of $\pi$. Assume $r_k\exp(i \theta)$ are the poles of (\ref{eqn:G}) with non-negative imaginary parts sorted in decreasing magnitudes $r_k$ where $k=1,2,\dots,n_p$. Define $l'_1=0,m'_1=1$ and choose $l'_k,m'_k\in\mathbb{N}$ for $k\in[2,n_p]$ such that $l'_k<m'_k$ and $\textnormal{gcd}(m'_k,l'_k)=1$, where
\begin{equation}\label{eqn:l'm'e'}
    2\pi l'_k/m'_k=\theta_k+\varepsilon'_k, \quad k=1,2,\dots,n_p.
\end{equation}
When $\theta_k/2\pi \in \mathbb{Q}$, we can choose $l'_k$, $m'_k$ such that $\varepsilon'_k=0$ holds in (\ref{eqn:l'm'e'}). Otherwise, since $\mathbb{Q}$ is dense in $\mathbb{R}$, one may choose $l'_k$, $m'_k$ such that $\vert\varepsilon'_k\vert$ is as small as desired.

Next, we use the following procedure that takes $l',m'$ and gives $l'',m''$ to satisfy the following analogue of condition~(\ref{eqn:m_notod_mu})
\begin{equation}\label{eqn:m_notod_mu''}
        m''_k\nmid m''_1m''_2\dots m''_{k-1}, \quad k\in[2,n_p].
    \end{equation}
\begin{itemize}
    \item[1.] Let $k=1$, $m''_1=1$, $l''_1=0$, $\mu''_1=1$.
    \item[2.] Set $k \gets k+1$. 
    \item[3.] If $m'_k \nmid \mu''_{k-1}$, set $m''_k\gets m'_k$ and $l''_k\gets l'_k$.
    \item[4.] If $m'_k \mid \mu''_{k-1}$, choose $\gamma_k\in\mathbb{N}$ such that
    \begin{equation}\label{eqn:gamma}
        \gamma_k>\frac{\mu''_{k-1}-1}{m'_kl'_k}
    \end{equation}
    and set
    \begin{align}\label{rule}
    l''_k&\gets l'^2_k\gamma_k,    \nonumber\\
    m''_k&\gets m'_kl'_k\gamma_k+1.
    \end{align}
    \item[5.] Set $\mu''_k= m''_k\mu''_{k-1}$.
    \item[6.] If $k<n_p$, go to step 2.
\end{itemize}
The above procedure terminates in $n_p-1$ iterations. Condition (\ref{eqn:gamma}) ensures that the output sequence $m''$ satisfes (\ref{eqn:m_notod_mu''}), while (\ref{rule}) ensures that the angles $2\pi l''_k/m''$ can tend to $2\pi l'_k/m'$ as follows
\begin{equation}\label{eqn:l"m"e"}
    2\pi l''_k/m''_k=2\pi l'_k/m'_k+\varepsilon''_k, \quad k=1,2,\dots,n_p
\end{equation}
where
$$
\varepsilon''_k=\left\lbrace\begin{array}{ll}
    0 &  m'_k \nmid \mu''_{k-1}\\
    2\pi\left(\frac{ l'^2_k\gamma_k}{m'_kl'_k\gamma_k+1}-\frac{l'_k}{m'_k}\right) & m'_k \mid \mu''_{k-1}.
\end{array}\right.
$$
Since $\lim_{\gamma_k\to+\infty}\vert \varepsilon''_k\vert=0$, one may choose $\gamma_k$ in (\ref{eqn:gamma}) large enough to make $\vert \varepsilon''_k\vert$ as small as desired. By (\ref{eqn:l'm'e'}) and (\ref{eqn:l"m"e"}), we have
$$\vert 2\pi l''_k/m''_k-\theta_k\vert\leq \vert\varepsilon'_k\vert+\vert\varepsilon''_k \vert.$$

We have found angles $2\pi l''_k/m''_k$ that satisfy (\ref{eqn:m_notod_mu''}) and are as close as desired to $\theta_k$ for $k=1,2,\dots,n_p$. Choosing these angles (and their additive inverses) as the angle of the poles in (\ref{eqn:G}) and choosing the same pole magnitudes as in (\ref{eqn:transfer_function}), we obtain a transfer function $G(s)$ that is arbitrarily close to $H(z)$. Since $H(z)$ is an interior point of $\mathcal{M}$, one can ensure $G(z)\in\mathcal{M}$. Therefore,
$G(z)$ is feasible in the linear program~(\ref{eqn:linprog}) by Theorem~\ref{thm:mu}.   
\end{proof}

\subsection{Systems with several positive poles}~\label{sec:several}
As shown in this section, ``most'' systems with one positive pole ($H(z)\in\mathcal{M}$) and no system with two or more positive poles ($H(z)\in\mathcal{X}-\mathcal{M}$) admit positive Markov realizations. Hence, for the linear program~(\ref{eqn:linprog}) to be feasible (and for a positive Markov realization to exist), the system must have one and only one positive pole.


Nevertheless, if a system $H(z)$ with $n_+(\geq 2)$ positive poles is decomposed into $n_+$ series or parallel subsystems each having one positive pole, then the linear program~(\ref{eqn:linprog}) can be applied to each subsystem, giving a positive Markov realization with dimension $N_i$ when feasible, where $i=1,2,\dots,n_+$. These realizations can be combined to obtain a realization for $H(z)$ using standard methods~\cite{combo}. This ``compound'' realization of $H(z)$ is positive and has the dimension $N=\sum_{i=1}^{n_+}N_i$.

\section{Third-order systems}\label{sec:3rdorder}
  For first- and second-order systems, the positive minimal realization dimension equals the system order, \emph{i.e.}, $N^{\star}=n$ where $n\leq 2$~\cite[\S 9]{farinabook}. This is not generally true for higher-order systems. The minimal positive realization problem is still open even for third-order systems~\cite{survey}.
  Therefore, we apply our results to third-order systems in $\mathcal{M}$ and evaluate the sharpness of the upper bound of the minimal positive realization dimension provided by the linear program (\ref{eqn:linprog}).

\begin{figure*}[t]
     \centering
    \begin{subfigure}[b]{0.32\textwidth}
        \centering
        \includegraphics[width=1\linewidth]{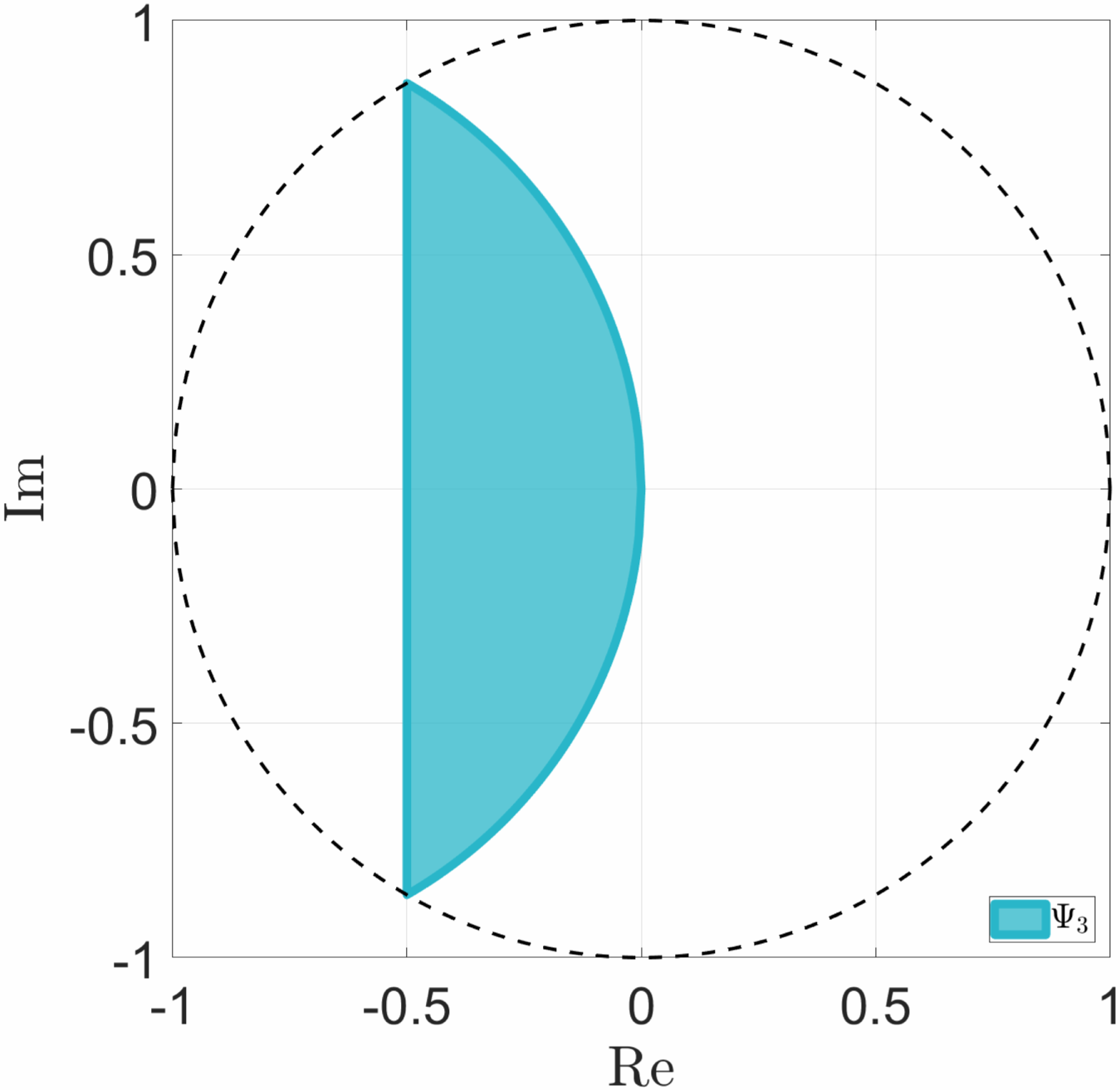}
	    \caption{$N=3$.}
         \label{fig:N3}
     \end{subfigure}
     \begin{subfigure}[b]{0.32\textwidth}
        \centering
        \includegraphics[width=1\linewidth]{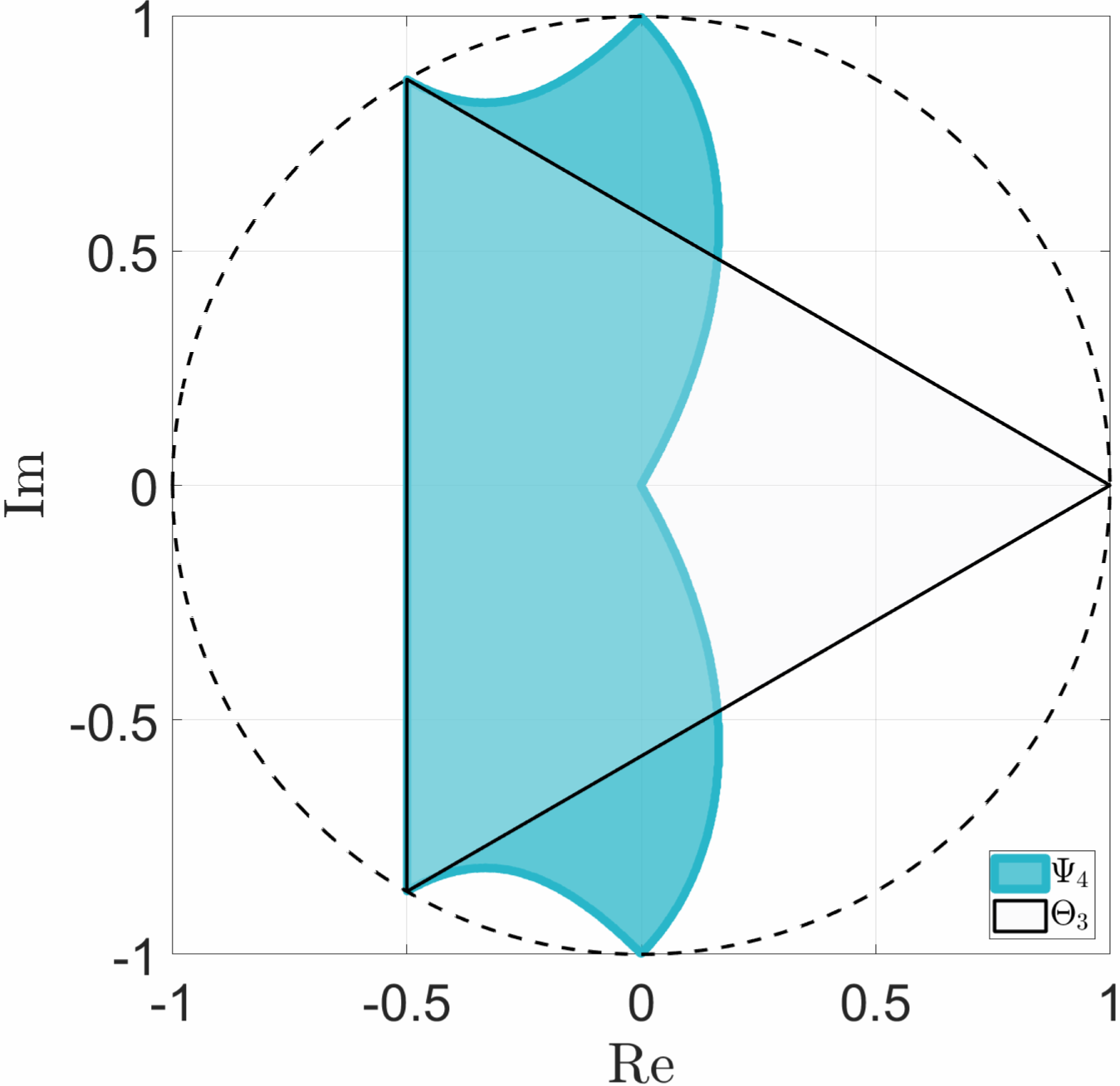}
	    \caption{$N=4$.}
         \label{fig:N4}
     \end{subfigure}
    \begin{subfigure}[b]{0.32\textwidth}
         \centering
        \includegraphics[width=1\linewidth]{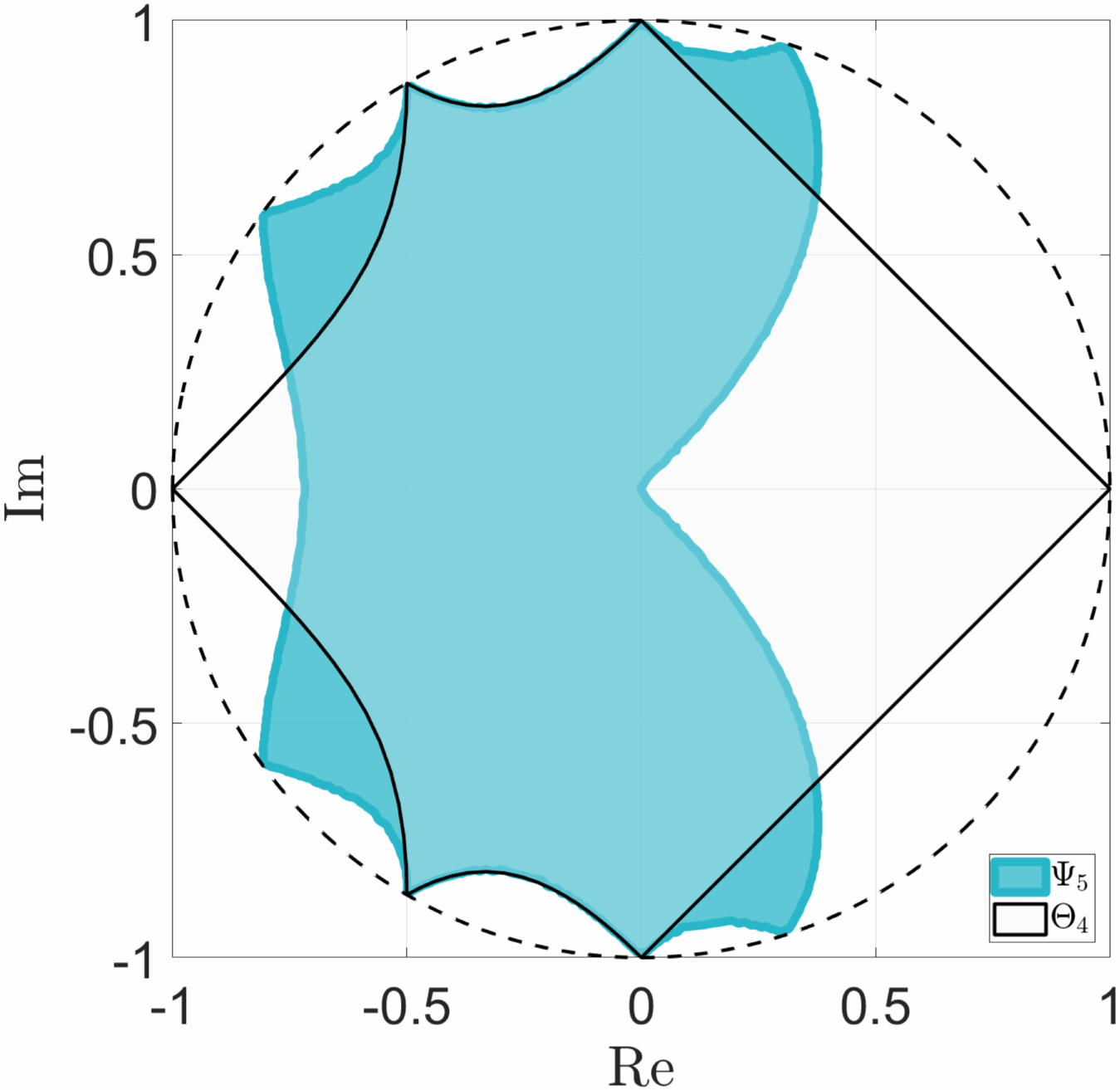}
	    \caption{$N=5$.}
       \label{fig:N5}
    \end{subfigure}
        \caption{When the complex-conjugate poles of a third-order system are located on the blue region, \emph{i.e.}, $p_{2,3}\in\Psi_N$, the system has a positive realization with dimension $N$. If these poles are also \emph{outside} the Karpelevič region enclosed by a solid black boundary, \emph{i.e.}, $p_{2,3}\notin\Theta_{N-1}$, the minimum positive realization dimension equals $N^{\star}=N$.
        }
        \label{fig:N}
\end{figure*}

Consider a third-order system with two complex-conjugate poles $p_{2,3}=x\pm iy$. We use the linear program (\ref{eqn:linprog}) to obtain an upper bound $N$ for the minimal positive realization dimension $N^{\star}$ of this system. To illustrate the results, we define the region on the complex plane where the poles satisfy the condition (\ref{eqn:a*q<0}) as follows:
    \begin{equation}
    \begin{array}{ll}
        \Psi_N&:=\left\lbrace x+iy \,\vert\, \exists q\in\mathbb{N}^{N-n+1} \right.\\
        &\left. \ni q_0=1, (a*q)_k\leq 0, 1\leq k\leq N \right\rbrace,
    \end{array}
    \end{equation}
    where $a$ is the coefficient sequence of $A(z)$. When $p_{2,3}\in\Psi_N$, the linear program (\ref{eqn:linprog}) is feasible and the system has a positive (Markov) realization of dimension $N$. The regions $\Psi_N$ are plotted in Figure~\ref{fig:N} for $N\in\lbrace 3,4,5\rbrace$. In particular, when the complex-conjugate poles belong to $\Psi_3$, the system has a \emph{minimal} positive realization of order $N=3$. This region $\Psi_3$ is the closure of the region found in \cite{3rd}.

    
    As can be seen in Figure~\ref{fig:N}, increasing $N$ results in more relaxed conditions in (\ref{eqn:a*q<0}), and therefore, in larger regions on the complex plane, \emph{i.e.}, $\Psi_{N}\subseteq \Psi_{N+1}$. It is also instructive to compare these regions with Karpelevič regions in Figure~\ref{fig:N}. The Karpelevič region $\Theta_{N}$ characterizes the possible location of an eigenvalue of a non-negative matrix $A\in\mathbb{R}^{N\times N}$ with a unit spectral radius in the complex plane~\cite{ito}. This induces a lower bound on the minimal positive realization dimension $N^{\star}$ based on the location of the system poles as follows~\cite[Theorem 8]{survey}
    \begin{equation}\label{eqn:Karpevich_lowerbound}
        p_{2,3}\not\in\Theta_{N-1} \Rightarrow N\leq N^{\star}.
    \end{equation}
    Therefore, the dimension $N$ found by the linear program (\ref{eqn:linprog}) is minimal, when the poles lie outside the Karpelevič region $\Theta_{N-1}$, \emph{i.e.},
    \begin{equation}\label{eqn:karp_arrow}
        p_{2,3}\in\Psi_N-\Theta_{N-1} \Rightarrow N^{\star}=N.
    \end{equation}
    For example, consider the case where the complex-conjugate poles are given by
    $$
    p_{2,3}=\vert p_{2,3}\vert \exp(\pm i2\pi l_2/m_2),
    $$
    where $l_2,m_2\in\mathbb{N}$, $l_2<m_2$ and $\textnormal{gcd}(m_2,l_2)=1$. In this case, the linear program (\ref{eqn:linprog}) is feasible with $N=m_2$ regardless of the pole magnitudes, according to Theorem~\ref{thm:mu}. As the magnitude of the poles increases, both poles eventually leave $\Theta_{N-1}$ and the relation (\ref{eqn:karp_arrow}) holds. To see this, note that the Karpelevič region $\Theta_{N-1}$ intersects the unit circle at the points
    \begin{align}\label{eqn:karpoints}
        V_{N-1}&=\Theta_{N-1}\cap \lbrace z\in\mathbb{C}\,\vert\,\vert z\vert=1\rbrace \\
        &= \lbrace \exp(i2\pi l/m)\,\vert\, 0\leq l\leq m\leq N-1,\, \nonumber\\
        &\quad\textnormal{gcd}(m,l)=1\rbrace \nonumber
    \end{align}
    and we have
    $$\lim_{\vert p_{2,3}\vert\to 1} p_{2,3}=\exp(\pm i 2\pi l_2/N)\notin V_{N-1}.$$ 
    We conclude that, when the complex-conjugate poles of the third-order system are large enough, the linear program~(\ref{eqn:linprog}) gives the exact minimal positive realization dimension of the system which can be realized in Markov form (\ref{eqn:Markov_realization}). The minimal positive Markov realization dimension equals the minimal positive realization dimension for these systems. This complements the available results for third-order systems in the literature. For example, Corollary~4 in \cite{Pi} asserts that when the complex-conjugate poles belong to the polygons $p_{2,3}\in\Pi_N(\subset \Theta_N)$ in Figure~\ref{fig:PI} and certain additional conditions are satisfied, the system has a positive realization of dimension $N$. The linear program~(\ref{eqn:linprog}) covers the regions missed by $\Pi_N$ at the corners of $\Theta_N$ as shown in Figure~\ref{fig:PI}. We also emphasize that there are other regions where the bound provided by the linear program~(\ref{eqn:linprog}) is larger than that found by \cite{Pi}.\looseness=-1

    \begin{figure}
     \centering
        \includegraphics[width=0.7\linewidth]{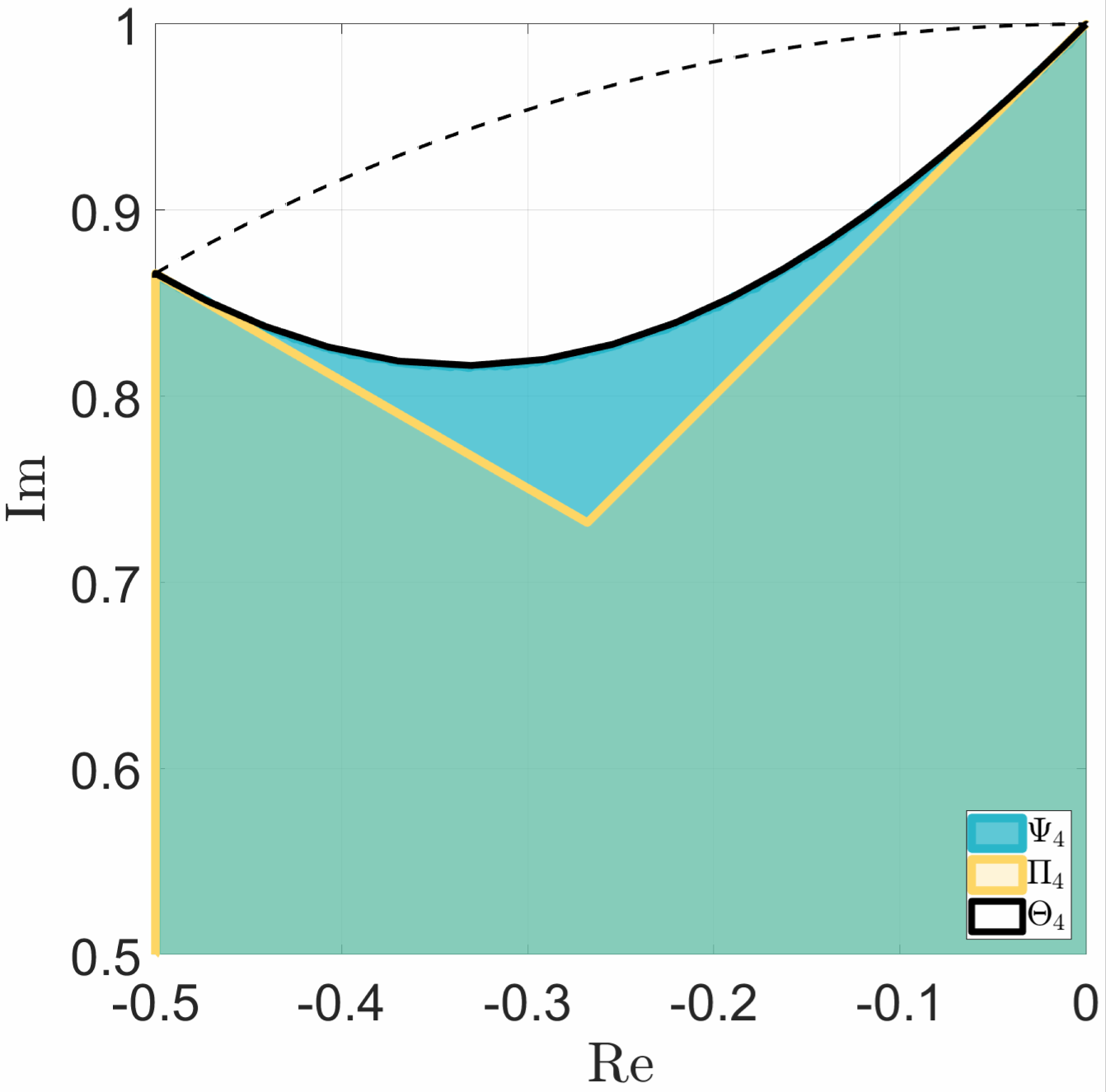}
        \caption{        
        The region $\Pi_4$ (below the yellow line) contains the complex-conjugate poles of third-order systems that admit positive realizations of dimension $4$ according to \cite{Pi}. The linear program~(\ref{eqn:linprog}) still finds a positive realization of dimension $N=4$ outside this region, \emph{i.e.}, $p_{2,3}\in\Psi_4-\Pi_4$. Note that there are no third-order systems with poles outside the Karpelevič region $\Theta_4$ (above the solid black line) with positive realization of dimension $4$ or smaller, according to (\ref{eqn:Karpevich_lowerbound}).\looseness=-1     
        }
        \label{fig:PI}
\end{figure}

\section{Conclusion}\label{sec:conc}
We studied positive Markov realizations and showed that a dense subset of transfer functions with one positive pole admit such realizations (Corollary~\ref{cor:intM}). This result can be extended to systems with several positive poles by a series or parallel connection of these transfer functions (Section~\ref{sec:several}). An interesting feature of positive Markov realization is that one can minimize its dimension using linear programs. The minimum dimension of positive Markov realizations is generally an upper bound of the minimal positive realization dimension. However, these two dimensions are equal for some systems. This includes all third-order systems with complex conjugate poles, where: (i) the pole angles are rational multiples of $\pi$ and (ii) the pole magnitudes are large enough (Section~\ref{sec:3rdorder}).\looseness=-1



\section*{Acknowledgment}
The authors would like to thank the anonymous reviewers of a previous paper that brought our attention to this problem.

\printbibliography
\end{document}